\documentclass[preprint,1p,times]{elsarticle}

\usepackage[T1]{fontenc}
\usepackage[utf8]{inputenc}
\usepackage{microtype}
\usepackage{amsmath,amssymb,amsthm,mathtools}
\usepackage{booktabs}
\usepackage{enumitem}
\usepackage[hidelinks]{hyperref}

\biboptions{numbers,sort&compress}

\newtheorem{theorem}{Theorem}
\newtheorem{lemma}[theorem]{Lemma}
\newtheorem{corollary}[theorem]{Corollary}
\theoremstyle{definition}

\newcommand{\Ccal}{\mathcal{C}}
\newcommand{\Ucal}{\mathcal{U}}
\newcommand{\Scal}{\mathcal{S}}
\newcommand{\Pcal}{\mathcal{P}}
\newcommand{\Z}{\mathbb{Z}}
\newcommand{\Q}{\mathbb{Q}}
\newcommand{\val}{\operatorname{val}}
\newcommand{\vc}{\operatorname{vc}}
\newcommand{\HomNC}{\textnormal{\textsc{HomNC}}}

\hypersetup{
  pdftitle={Homogeneous Network Caching is Fixed-Parameter Tractable Parameterized by the Number of Caches},
  pdfauthor={Jozsef Pinter and Regina Stangl}
}

\begin{document}

\begin{frontmatter}

\title{Homogeneous Network Caching is Fixed-Parameter Tractable Parameterized by the Number of Caches}

\author[inst1,inst2]{J\'ozsef Pint\'er\corref{cor1}}
\ead{pinterj@edu.bme.hu}
\author[inst1]{Regina Stangl}
\ead{stangl.regina@edu.bme.hu}
\cortext[cor1]{Corresponding author.}

\affiliation[inst1]{organization={Department of Stochastics, Institute of Mathematics, Budapest University of Technology and Economics},
            city={Budapest},
            country={Hungary}}

\affiliation[inst2]{organization={HUN-REN--BME Stochastics Research Group},
            city={Budapest},
            country={Hungary}}

\begin{abstract}
Network caching asks how to place contents in distributed caches so that future requests are served close to their users. Ganian, Mc Inerney and Tsigkari recently initiated the parameterized-complexity study of the problem and, for the homogeneous unit-size variant (\HomNC{}), isolated an unresolved family of six parameterizations: by the number of caches $C$, the number of users $U$, $U+K$, $C+U$, $C+\lambda$, and the vertex-cover number $\vc(G)$, where $K$ is the maximum cache capacity and $\lambda$ is the maximum number of contents requested with nonzero probability by any user. Their interreducibility theorem showed that these six cases stand or fall together under parameterized reductions, and they conjectured the family to be W[1]-hard. We resolve this conjecture in the opposite direction. We prove that \HomNC{} is fixed-parameter tractable parameterized by $C$ alone, and therefore fixed-parameter tractable for all six parameterizations. Our algorithm is based on an exact $n$-fold integer programming formulation that reveals a nontrivial block structure in homogeneous network caching, with the repeated part depending only on $C$. Standard algorithms for $n$-fold integer programming then yield a running time of the form $f(C)\lvert I\rvert^{O(1)}$.
\end{abstract}

\begin{keyword}
network caching \sep homogeneous caching \sep parameterized complexity \sep fixed-parameter tractability \sep $n$-fold integer programming
\end{keyword}

\end{frontmatter}

\section{Introduction}

Caching is a standard way to improve the performance of networked
systems and reduce congestion: reusable content is stored closer to
users, either reactively or through proactive placement. Related
placement questions arise in content-delivery networks, wireless
helper systems, and information-centric networking
\cite{paschos2020cache,shanmugam2013femtocaching,zhang2013caching}.
In such applications, cache placement can reduce latency, backhaul
traffic, and peak load, but it must respect finite storage capacities
and potentially complex user--cache access constraints. Proactive caching has also been
proposed as a way to exploit demand prediction in wireless and
edge-like networks \cite{bastug2014living}; even the uncoded
helper-cache placement problem is NP-hard
\cite{shanmugam2013femtocaching}.

This note studies the homogeneous network-caching problem, denoted \HomNC{}, where all contents have unit size. Homogeneity is natural when files are split into equal-size chunks, or when the placement decision is made at the level of standardized cache objects. The model still retains the central combinatorial features of cache placement. Caches may have arbitrary capacities; users may access arbitrary subsets of caches; and users may have different weights and different request distributions. The objective has an OR semantics: a user obtains value from a content if at least one adjacent cache stores it, and additional adjacent copies of the same content do not increase that user's value.

Ganian, Mc Inerney and Tsigkari recently gave a systematic parameterized-complexity study of network caching \cite{ganian2025parameterized}. They considered the homogeneous unit-size problem together with heterogeneous variants and proved a broad collection of upper and lower bounds. For \HomNC{}, the unresolved homogeneous family consisted of the following six parameterizations:
\[
  C,
  \qquad U,
  \qquad U+K,
  \qquad C+U,
  \qquad C+\lambda,
  \qquad \vc(G).
\]
Here $C$ is the number of caches, $U$ is the number of users, $K$ is the maximum cache capacity, $\lambda$ is the maximum number of contents requested with nonzero probability by a user, and $\vc(G)$ is the vertex-cover number of the caching graph. All six of these appear as open cases in their main homogeneous landscape. Their interreducibility theorem shows that the six parameterized problems are pairwise connected by parameterized reductions \cite[Theorem~14]{ganian2025parameterized}. Ganian, Mc Inerney and Tsigkari conjectured that the family is W[1]-hard.

We resolve the open family on the positive side.

\begin{theorem}[Main theorem]\label{thm:main}
\HomNC{} can be solved in time $f(C)|I|^{O(1)}$ for a computable function $f$, where $C$ is the number of caches and $|I|$ is the binary encoding length of the input. Thus \HomNC{} is fixed-parameter tractable parameterized by the number of caches.
\end{theorem}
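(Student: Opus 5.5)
We model the problem as an $n$-fold integer program whose blocks are indexed by contents, with a brick size depending only on $C$, and then invoke a standard $n$-fold IP algorithm (e.g.\ Hemmecke--Onn--Romanchuk or the Eisenbrand et al.\ / Kouteck\'y et al.\ algorithms). Recall the setting: caches $c\in\Ccal$ with capacities $K_c$, users $u\in\Ucal$ with weights $w_u$ and request probabilities $p_{u,j}$ over contents $j$, and a bipartite access graph $G$. The value of a placement is
\[
  \sum_u w_u \sum_j p_{u,j}\,[\exists c\in N(u):\ j \text{ stored at } c].
\]
The key observation is that, for a fixed content $j$, the contribution of $j$ depends only on the set $S_j\subseteq\Ccal$ of caches storing $j$. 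It equals $v_j(S)=\sum_{u:\,N(u)\cap S\neq\emptyset} w_u p_{u,j}$. There are only $2^C$ possible sets $S$, and the only coupling between contents is the capacity constraints, $|\{j: c\in S_j\}|\le K_c$ for each $c$. So the problem is an assignment of each content to one of $2^C$ ``types''. It has $C$ global linear constraints and a per-content objective that can be tabulated in polynomial time (there are $2^C$ subsets per content, each evaluated in time polynomial in $|I|$).

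Concretely, for each content $j$ we introduce a brick of $2^C$ binary variables $x_{j,S}$, $S\subseteq \Ccal$. The local constraint is $\sum_S x_{j,S}=1$, a single row with entries bounded by $1$. The linking constraints are
\[
  \sum_j \sum_{S\ni c} x_{j,S} + s_c = K_c \quad \text{for every } c\in\Ccal,
\]
with slack variables $s_c\ge 0$ placed in one extra brick, or absorbed by padding each brick with a dummy coordinate. The objective $\max \sum_{j,S} v_j(S)x_{j,S}$ is linear, and its coefficients are rationals of encoding length polynomial in $|I|$. All constraint-matrix entries lie in $\{0,1\}$. The top block has $r=C$ rows, the local block has $s=1$ row, and the brick length is $t=2^C$ (plus slack). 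Every one of these parameters, and $\|A\|_\infty$, is bounded by a function of $C$. The $n$-fold IP algorithms run in time $(rs\|A\|_\infty)^{O(r^2s+rs^2)}\, t\, n\,\log$-factors times the encoding length, so we obtain $f(C)|I|^{O(1)}$. Here $n$ is the number of contents (possibly after discarding contents requested by nobody, though this is not even necessary). Since all data except the objective are tiny integers, scaling the rational objective to integers is harmless.

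It remains to verify exactness. Any feasible placement gives a 0/1 solution via $x_{j,S_j}=1$ with equal value. Conversely, any IP solution gives a placement (store $j$ on $S$ when $x_{j,S}=1$) that respects capacities and has the same value, by the OR semantics. The IP thus returns an optimal placement, which also settles the decision version through a threshold comparison. The main obstacle is checking that the input representation in \cite{ganian2025parameterized} is compatible with this approach. In particular, contents with zero request probability and large capacities $K_c$ are encoded in binary, so the number of contents may be large relative to the input. If the content set is given implicitly, or exceeds $|I|$, we first restrict to contents with some nonzero $p_{u,j}$, since the remaining ones have $v_j\equiv 0$. Also, $K_c$ may be capped at this number of relevant contents. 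Then $n\le|I|$, and the right-hand sides are only encoded logarithmically in the running time.
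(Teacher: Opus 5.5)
Your proposal is correct and follows the paper's proof essentially step for step. Both use one brick per content with $2^C$ pattern variables, the single local row $\sum_P x^s_P=1$, $C$ global capacity rows closed by slacks, objective coefficients $v_s(P)$ scaled to integers, a black-box $n$-fold IP algorithm, and a two-directional exactness check. One detail to tighten: all bricks share the same matrix $A$, so the slack needs $C$ columns per brick (the paper's $h^s_i$, one per cache, bounded by $0\le h^s_i\le\kappa(c_i)$), not a single dummy coordinate. Your pruning of contents with $v_j\equiv 0$ is a harmless extra safeguard that the paper does not use.
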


The proof is based on a bounded-interface view of the problem. If $C$ is fixed, then each content has only $2^C$ possible placement patterns, namely the subsets of caches in which it is stored. Once each content has chosen one of these patterns, different contents interact only through the $C$ cache-capacity constraints. This is exactly the structure of an $n$-fold integer program: one bounded brick per content and a bounded number of global coupling rows. 

It is useful to spell out why the earlier methods do not already imply Theorem~\ref{thm:main}. Ganian, Mc Inerney and Tsigkari's capacity-vector dynamic program stores, for every vector of remaining cache capacities, the best value obtained after processing a prefix of the contents \cite[Theorem~3]{ganian2025parameterized}. With $C$ caches this state space contains
\[
  \prod_{i=1}^{C}(\kappa(c_i)+1)
\]
entries, corresponding to $(\kappa(c_i)+1)$ remaining-capacity values per cache. In \HomNC{} one may truncate each capacity at $|\Scal|$, since allocations are sets and no cache can usefully store more than the whole catalog, but the bound then becomes $(|\Scal|+1)^C$: this is XP in $C$, not FPT in $C$. 

The interreducibility theorem shows that the open cases have the same parameterized status, but it does not decide that status. Our formulation is different at the technical point where this matters: capacities are not coordinates of a dynamic-programming state, but right-hand sides and bounds in an integer program whose repeated matrices depend only on $C$.

By combining Theorem~\ref{thm:main} with the interreducibility result of Ganian, Mc Inerney, and Tsigkari~\cite[Theorem~14]{ganian2025parameterized}, we obtain the following corollary. We also prove it directly in Section~\ref{sec:consequences} to keep the paper self-contained.

\begin{corollary}\label{cor:intro}
\HomNC{} is fixed-parameter tractable parameterized by each of
\[
  C,
  \qquad U,
  \qquad U+K,
  \qquad C+U,
  \qquad C+\lambda,
  \qquad \vc(G).
\]
\end{corollary}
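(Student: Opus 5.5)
We derive Corollary~\ref{cor:intro} from Theorem~\ref{thm:main}, handling each parameter either by showing it bounds $C$ or by a preprocessing step that reduces to an instance whose number of caches is bounded by a function of the parameter. Three of the six parameters, namely $C$, $C+U$ and $C+\lambda$, are immediate. Each dominates $C$, so an algorithm running in time $f(C)|I|^{O(1)}$ also runs in time $f(k)|I|^{O(1)}$ for $k\in\{C,\,C+U,\,C+\lambda\}$, provided $f$ is taken nondecreasing.

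For $U$ (and hence $U+K$, which dominates $U$), we reduce the number of caches. A cache is relevant only through the set of users adjacent to it, and there are at most $2^U$ such neighbourhood types. Caches with no adjacent user can be deleted. Caches of the same type $T\subseteq\Ucal$ can be merged into a single cache with capacity equal to the sum of their capacities, truncated at $|\Scal|$.

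The key step is to check that merging preserves the optimum. With the OR semantics, a user in $T$ gains from a content exactly when some cache of type $T$ stores it. So the value of an allocation depends only on the union of the contents stored across the caches of each type. A set of $m$ distinct contents can be distributed over caches of a common type with total capacity at least $m$, because each content needs only one copy and the capacities can be filled greedily. Conversely, the union of the contents stored in those caches has size at most the total capacity, and also at most $|\Scal|$. Hence the merged instance has the same optimal value and at most $2^U$ caches, and Theorem~\ref{thm:main} gives time $f(2^U)|I|^{O(1)}$.

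For $\vc(G)$, fix a minimum vertex cover $X$; it can be computed in FPT time, or one can use any $2$-approximation. Since the caching graph is bipartite between users and caches, every cache outside $X$ has all its neighbours among the at most $\vc(G)$ users in $X$. Caches in $X$ number at most $\vc(G)$. The caches outside $X$ have at most $2^{\vc(G)}$ neighbourhood types, so the same merging argument collapses them to at most $2^{\vc(G)}$ caches. Merging only combines caches with identical user neighbourhoods, so the argument applies unchanged even though users outside $X$ may also be present. The result is at most $\vc(G)+2^{\vc(G)}$ caches, and Theorem~\ref{thm:main} applies.

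The main point needing care is the merging lemma. One must confirm that merged capacities cause no loss, including the truncation at $|\Scal|$ and the requirement that a cache store a set rather than a multiset. One must also confirm that an optimal allocation for the merged instance can be split back into a feasible allocation for the original caches. The greedy splitting argument handles the latter: assign the merged cache's contents to the original caches in order, filling each to capacity.
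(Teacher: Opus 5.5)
Your proposal is correct and follows the paper's proof essentially step by step. The parameters $C$, $C+U$, $C+\lambda$ are immediate, and for $U$, $U+K$ and $\vc(G)$ you collapse caches with identical user-neighbourhoods, as in Lemma~\ref{lem:cachetwins}, to bound the number of caches by $2^U$, respectively $|X\cap\Ccal|+2^{|X\cap\Ucal|}$, before invoking Theorem~\ref{thm:main}. The differences are only cosmetic: you merge whole twin classes at once (the truncation at $|\Scal|$ is harmless but unnecessary), and you may use a minimum vertex cover, which is polynomial-time here since $G$ is bipartite, giving $\vc(G)+2^{\vc(G)}$ caches; if you instead use the 2-approximation you mention, the bound becomes $2\vc(G)+2^{2\vc(G)}$, as in the paper.
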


The rest of the note is organized as follows. Section~\ref{sec:definitions} fixes the \HomNC{} model and recalls the $n$-fold IP framework. Section~\ref{sec:main} proves Theorem~\ref{thm:main}. Section~\ref{sec:consequences} presents the resulting consequences and their direct proofs. Section~\ref{sec:conclusion} concludes.

\section{Definitions}\label{sec:definitions}

A \emph{caching graph} is a bipartite graph
\[
  G=(\Ccal\cup\Ucal,E),
\]
where $\Ccal$ is the set of caches and $\Ucal$ is the set of users. For a vertex $v$, let $N(v)$ be its open neighborhood. Thus $N(u)\subseteq\Ccal$ is the set of caches accessible to user $u$, while $N(c)\subseteq\Ucal$ is the set of users served by cache $c$.

A \HomNC{} instance consists of $G$, a finite catalog $\Scal$ of contents, cache capacities $\kappa(c)\in\Z_{\ge 0}$, request probabilities $p_{u,s}\in\Q_{\ge 0}$, user weights $w(u)\in\Q_{\ge 0}$, and a target value $\ell\in\Q_{\ge 0}$. If $\Scal=\emptyset$, the optimum is $0$ and the instance is decided immediately. Otherwise, for every user $u$, the values $(p_{u,s})_{s\in\Scal}$ form a probability distribution over $\Scal$. All contents have unit size.

An allocation assigns a set $A(c)\subseteq\Scal$ to each cache $c$ such that
\[
  |A(c)|\le \kappa(c).
\]
The contents visible to user $u$ are
\[
  R_A(u)=\bigcup_{c\in N(u)} A(c),
\]
and the cache-hit value is
\begin{equation}\label{eq:value}
  \val(A)=\sum_{u\in\Ucal} w(u)\sum_{s\in R_A(u)}p_{u,s}.
\end{equation}
The decision problem asks whether there exists an allocation $A$ with $\val(A)\ge \ell$.

We use the parameters
\[
  C=|\Ccal|,
  \qquad
  U=|\Ucal|,
  \qquad
  K=\max_{c\in\Ccal}\kappa(c),
\]
where $K=0$ if $\Ccal=\emptyset$, and
\[
  \lambda=
  \max_{u\in\Ucal}|\{s\in\Scal:p_{u,s}>0\}|,
\]
where $\lambda=0$ if $\Ucal=\emptyset$. The vertex-cover number of $G$ is denoted by $\vc(G)$.

All numbers are encoded in binary. Put $\alpha_{u,s}=w(u)p_{u,s}$. To use an integer objective, let $D$ be the product of the denominators of all nonzero rational numbers $\alpha_{u,s}$ and of $\ell$, after writing them in lowest terms. The bit-length of $D$ is at most the sum of these denominator bit-lengths, and is therefore polynomial in the input length. Using the least common multiple instead would only decrease $D$. Multiplication by $D$ preserves the answer:
\[
  \val(A)\ge \ell
  \quad\Longleftrightarrow\quad
  D\val(A)\ge D\ell.
\]
Let $T=D\ell$.

We next recall the only external algorithmic tool. For matrices $A\in\Z^{r\times t}$ and $B\in\Z^{q\times t}$, their $n$-fold product is
\[
[A,B]^{(n)}=
\begin{pmatrix}
A&A&\cdots&A\\
B&0&\cdots&0\\
0&B&\cdots&0\\
\vdots&\vdots&\ddots&\vdots\\
0&0&\cdots&B
\end{pmatrix}.
\]
The first $r$ rows are global rows, and the remaining $nq$ rows are local rows, one group for each brick.

\begin{theorem}[$n$-fold IP algorithms]\label{thm:nfold}
Let $A\in\Z^{r\times t}$ and $B\in\Z^{q\times t}$ be fixed block matrices, and let $\Gamma=\max\{\|A\|_\infty,\|B\|_\infty,1\}$. Given $n$, binary-encoded finite integer lower and upper bounds, a binary-encoded integer right-hand side, and a binary-encoded integer linear objective whose coefficients may vary between bricks, integer linear optimization over
\[
  \max\ w^\top x
  \quad\text{s.t.}\quad
  [A,B]^{(n)}x=b,
  \qquad l\le x\le u,
  \qquad x\in\Z^{nt},\ w\in\Z^{nt},
\]
is fixed-parameter tractable parameterized by $r,q,t$ and $\Gamma$. Equivalently, if the block matrices depend only on a parameter $k$, then the program can be solved in time $g(k)L^{O(1)}$, where $L$ is the binary encoding length of all numerical input, including $n$, $b$, $l$, $u$, and the objective coefficients.
\end{theorem}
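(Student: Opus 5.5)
The statement is a known algorithmic result, so I would prove it by assembling the standard Graver-basis augmentation framework for $n$-fold integer programs (Hemmecke, Onn and Romanchuk; Koutecký, Levin and Onn; Eisenbrand, Hunkenschröder, Klein, Koutecký, Levin and Onn) rather than from scratch. The plan has four steps: reduce to a well-posed optimization with a known feasible point, bound the Graver basis of $[A,B]^{(n)}$ by a function of $r,q,t,\Gamma$ alone, compute good augmenting steps by dynamic programming over bricks, and bound the number of augmentations polynomially in $L$.

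\textbf{Step 1 (reduction to augmentation).} First I handle infeasibility and the initial solution. Because the bounds $l,u$ are finite, the feasible region is a bounded lattice polytope, so the optimum is attained whenever the program is feasible. To get a starting point, I would introduce slack variables on the $r$ global rows and the $q$ local rows of each brick. This only enlarges $t$ and the block matrices by an amount depending on $r$ and $q$, and keeps $\Gamma$ unchanged. I would then minimize the total slack, starting from the trivial point $x=l$ with the slacks absorbing the residual. This auxiliary problem is again an $n$-fold IP, and its optimum is zero exactly when the original program is feasible. From here on it suffices to solve the optimization problem when a feasible point is given.

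\textbf{Step 2 (Graver norm bound).} The key structural fact is that every element $g$ of the Graver basis of $[A,B]^{(n)}$ satisfies $\|g\|_1\le h(r,q,t,\Gamma)$ for a computable $h$ independent of $n$. I would prove this in the usual two-level way.
\begin{itemize}
\item Each brick $g^i$ lies in $\ker B$, so it decomposes conformally into Graver elements of $B$. Their $\ell_1$-norms are bounded by a Steinitz / Eisenbrand--Weismantel bound of the form $(2q\Gamma+1)^q$.
\item Applying $A$ to these pieces gives vectors in $\Z^r$ of bounded norm. A Steinitz-lemma rearrangement then shows that a sequence of such vectors summing to $0$ has a zero-sum subsequence of bounded length.
\item Conformal minimality of $g$ therefore forces the number of pieces, and hence $\|g\|_1$, to be bounded.
\end{itemize}
I expect this to be the main obstacle. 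It is the place where independence from $n$ must be earned, and the two Steinitz applications must be carried out carefully so that only $r,q,t,\Gamma$ enter.

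\textbf{Step 3 (augmentation oracle).} Given a feasible $x$ and a step length $\mu$, I need to find a $g$ with $\|g\|_1\le h$, $[A,B]^{(n)}g=0$ and $l\le x+\mu g\le u$, maximizing $w^\top g$. I would solve this by dynamic programming brick by brick. The state is the partial global sum $\sum_{j\le i}Ag^j$, whose entries are bounded by $\Gamma h$, so there are at most $(2\Gamma h+1)^r$ states. Each brick contributes one of at most $(2h+1)^t$ candidates $g^i\in\ker B$. This gives time $g'(k)\cdot n$ per call.

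\textbf{Step 4 (number of augmentations).} I would use Graver-best steps tried over $\mu\in\{1,2,4,\dots\}$ up to the range of the bounds. The standard argument shows that the best such step achieves at least a $1/(2nt)$-fraction of the remaining optimality gap, since $x^*-x$ is a conformal sum of at most $2nt$ Graver elements. Together with the fact that the objective range is at most $2^{O(L)}$, this yields $O(ntL)$ augmentations, each polynomial in $L$ times $g'(k)$. Multiplying the costs of Steps 1, 3 and 4 gives $g(k)L^{O(1)}$, where the objective coefficients are allowed to differ between bricks because the dynamic program in Step 3 evaluates $w$ brick by brick. Alternatively, all of this can be invoked directly from the cited $n$-fold literature, where the statement appears in exactly this form.
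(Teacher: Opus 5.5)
Your proposal is correct, but it does something the paper deliberately does not do. The paper gives no proof of Theorem~\ref{thm:nfold}. It treats the result as a black box from \cite{de2008n,hemmecke2013n,jansen2020near}, and points to the Jansen--Lassota--Rohwedder algorithm only to get an explicit near-linear bound with computable dependence on $r,q,\Gamma$.

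You instead reconstruct the classical Graver-augmentation proof that underlies those citations, essentially the Hemmecke--Onn--Romanchuk framework combined with the later Steinitz-type Graver bound. Its parts are:
\begin{enumerate}[label=(\roman*)]
\item a phase-one slack program;
\item an $n$-independent $\ell_1$ bound on Graver elements of $[A,B]^{(n)}$, obtained from two Steinitz-type applications;
\item a brick-by-brick dynamic program for the augmentation oracle;
\item geometric decrease of the optimality gap.
\end{enumerate}
What your route buys is a self-contained explanation of exactly the features the paper relies on. Only the block shape must be bounded by the parameter, while $n$, $b$, $l$, $u$ and brick-dependent objective coefficients can be arbitrary binary data, because the dynamic program reads them brick by brick. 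The cost is a worse polynomial dependence on $L$ than the cited near-linear algorithm, though this still suffices for $g(k)L^{O(1)}$.

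A few details should be tightened; none is a gap.
\begin{itemize}
\item In Step~3, the state as written does not enforce $\|g\|_1\le h$ across bricks. Either add the accumulated norm (at most $h+1$ values) to the state, or observe that the relaxation is harmless: the dynamic program still returns only kernel steps with $x+\mu g$ feasible, and their value is at least that of the best Graver step.
\item In Step~4, rounding the multiplier down to a power of two costs a factor $2$. The guaranteed fraction of the gap is therefore about $1/(4nt)$, not $1/(2nt)$.
\item In Step~1, the local slack columns must be identical in every brick, while the signs of the residuals $b^i-Bl^i$ vary from brick to brick. You therefore need both $+I_q$ and $-I_q$ columns, or a single slack column with sign-dependent per-brick bounds and objective coefficients.
\item Also in Step~1, the global slack columns must appear in every brick's $A$-block, for instance with nonzero upper bounds in only one brick.
\end{itemize}
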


Theorem~\ref{thm:nfold} follows from the algorithmic theory of $n$-fold integer programming \cite{de2008n,hemmecke2013n,jansen2020near}. We use it only as a black box. What matters here is that the number of bricks may be large, and the right-hand sides, bounds, and objective coefficients may be binary encoded and may vary between bricks, as long as the repeated block shape is parameter-bounded. For example, the near-linear algorithm of Jansen, Lassota and Rohwedder gives, for finite bounds, a running time of the form
\[
  (rq\Gamma)^{O(r^2q+q^2)} L^2\, nt\log^{O(1)}(nt),
\]
using their notation. Applied to our formulation this gives an explicit computable dependence on $C$.

\section{The main theorem: FPT by the number of caches}\label{sec:main}

If $\Ccal=\emptyset$ or $\Scal=\emptyset$, the optimum is $0$, so the instance is immediate. Therefore, we assume $C\ge 1$ and $\Scal\ne\emptyset$. Enumerate the caches as
\[
  \Ccal=\{c_1,\ldots,c_C\}.
\]
A placement pattern is a subset $P\subseteq [C]$. It means that a content is stored exactly in the caches $c_i$ with $i\in P$; the empty pattern is allowed and means that the content is not stored anywhere. Let $\Pcal=2^{[C]}$.

For each content $s\in\Scal$ and pattern $P\in\Pcal$, define
\begin{equation}\label{eq:patternvalue}
  v_s(P)=
  \sum_{\substack{u\in\Ucal:\ N(u)\cap\{c_i:i\in P\}\ne\emptyset}}
  \alpha_{u,s},
  \qquad
  V_s(P)=Dv_s(P)\in\Z.
\end{equation}
Thus $v_s(P)$ is exactly the contribution of content $s$ if it is placed according to $P$: user $u$ contributes precisely when at least one accessible cache stores $s$.

For every content $s$ and pattern $P$, introduce a variable $x^s_P\in\Z_{\ge0}$, with intended meaning $x^s_P=1$ if $s$ chooses pattern $P$. For every content $s$ and cache index $i\in[C]$, introduce a slack variable $h^s_i\in\Z_{\ge0}$. The integer program is
\begin{align}
  \sum_{P\in\Pcal}x^s_P &= 1 &&\text{for every }s\in\Scal, \label{eq:local}\\
  \sum_{s\in\Scal}\left(\sum_{\substack{P\in\Pcal:\ i\in P}}x^s_P+h^s_i\right) &= \kappa(c_i) &&\text{for every }i\in[C], \label{eq:global}\\
  \max\quad \sum_{s\in\Scal}\sum_{P\in\Pcal}V_s(P)x^s_P. \label{eq:objective}
\end{align}
We impose the finite bounds
\[
  0\le x^s_P\le 1,
  \qquad
  0\le h^s_i\le \kappa(c_i).
\]
The variables $h^s_i$ merely turn capacity inequalities into equalities, which is the standard form for $n$-fold IP. Their upper bounds do not require capacities to be small; capacities are binary-encoded numerical bounds, and any residual capacity of cache $c_i$ is at most $\kappa(c_i)$.

Let one brick be
\[
  z^s=((x^s_P)_{P\in\Pcal},(h^s_i)_{i\in[C]})
  \in\Z^{2^C+C}_{\ge0}.
\]
Let $B_C$ be the row vector with coefficient $1$ on every pattern variable and coefficient $0$ on every slack variable. Let $A_C$ have one row for each cache $i\in[C]$; in row $i$, the coefficient of pattern column $P$ is $1$ if $i\in P$ and $0$ otherwise, and the coefficient of slack column $h_j$ is $1$ if $i=j$ and $0$ otherwise.

\begin{lemma}\label{lem:nfold}
The constraints \eqref{eq:local}--\eqref{eq:global} form an $n$-fold integer program with $n=|\Scal|$ bricks, $C$ global rows, one local row per brick, $2^C+C$ variables per brick, and coefficient bound $1$. The block matrices depend only on $C$.
\end{lemma}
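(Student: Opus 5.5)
The plan is to verify directly that, after fixing an ordering $\Scal=\{s_1,\ldots,s_n\}$ with $n=|\Scal|$ and concatenating the bricks $z^{s_1},\ldots,z^{s_n}$ into one vector $x\in\Z^{n(2^C+C)}$, the system \eqref{eq:local}--\eqref{eq:global} reads exactly $[A_C,B_C]^{(n)}x=b$. Here $b$ is the vector whose first $C$ entries are $\kappa(c_1),\ldots,\kappa(c_C)$ and whose remaining $n$ entries are all $1$. I would fix a common column order inside every brick: first the $2^C$ pattern columns in a fixed enumeration of $\Pcal$, then the $C$ slack columns $h_1,\ldots,h_C$. This common order is what makes every brick use literally the same matrices $A_C$ and $B_C$.

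First I will handle the local rows. Constraint \eqref{eq:local} for $s_j$ involves only the variables of brick $z^{s_j}$. Each of them enters with coefficient $1$ on a pattern column and $0$ on a slack column, so the constraint is exactly $B_Cz^{s_j}=1$. Hence there is one local row per brick ($q=1$), and the local part is the block-diagonal matrix with $n$ copies of $B_C$.

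Next I will handle the global rows. The inner sum in \eqref{eq:global} for cache $i$ and content $s$ is $\sum_{P\ni i}x^s_P+h^s_i$. This is precisely row $i$ of $A_C$ applied to $z^s$: row $i$ has coefficient $1$ on pattern column $P$ iff $i\in P$, and coefficient $1$ on slack column $h_j$ iff $j=i$. Summing over $s$ gives $\sum_{j}A_Cz^{s_j}=\kappa(c_i)$ row by row. So the global part is $(A_C\ A_C\ \cdots\ A_C)$ with $r=C$ rows.

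Finally I will record the parameters. There are $t=2^C+C$ columns per brick, and all entries of $A_C$ and $B_C$ lie in $\{0,1\}$. Moreover $B_C$ has a nonzero entry, since $\Pcal\neq\emptyset$, so $\Gamma=1$. Since $A_C$ and $B_C$ were defined purely in terms of $[C]$ and $\Pcal=2^{[C]}$, they depend only on $C$. The instance data enters only through $b$, the finite bounds $0\le x^s_P\le 1$ and $0\le h^s_i\le\kappa(c_i)$, and the brick-varying objective coefficients $V_s(P)$, all of which Theorem~\ref{thm:nfold} allows.

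There is no real obstacle. The only point needing care is the bookkeeping that the column order is identical across bricks, so that one matrix pair serves every brick; that is fixed at the outset.
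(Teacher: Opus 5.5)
Your proposal is correct and follows essentially the same approach as the paper. Both use one brick per content with a common column order, $B_C$ encoding \eqref{eq:local} and $A_C$ encoding \eqref{eq:global}, with $0$--$1$ entries and dimensions depending only on $C$; you just spell out the bookkeeping the paper leaves implicit, namely the right-hand side $b=(\kappa(c_1),\ldots,\kappa(c_C),1,\ldots,1)$ and the fixed within-brick column order.
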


\begin{proof}
Each content contributes one brick $z^s$. The local matrix $B_C$ enforces \eqref{eq:local}. The global matrix $A_C$ records how a chosen pattern contributes one unit of load to each cache and how the slack columns fill unused capacity. The same matrices $A_C$ and $B_C$ are used for every content, all entries are in $\{0,1\}$, and their dimensions are determined solely by $C$.
\end{proof}

\begin{lemma}\label{lem:exact}
The integer program \eqref{eq:local}--\eqref{eq:objective} has an integer feasible solution of objective value at least $T$ if and only if the original \HomNC{} instance has an allocation of value at least $\ell$.
\end{lemma}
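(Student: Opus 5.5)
The proof goes through a value-preserving correspondence between allocations and feasible integer points, in both directions. The key identity is that, for any allocation $A$, regrouping the sum \eqref{eq:value} by content gives
\[
  \val(A)=\sum_{s\in\Scal}\sum_{u\in\Ucal:\ s\in R_A(u)}\alpha_{u,s}=\sum_{s\in\Scal} v_s(P_A(s)),
\]
where $P_A(s)=\{i\in[C]: s\in A(c_i)\}$ is the pattern of $s$ under $A$. This holds because $s\in R_A(u)$ exactly when $N(u)\cap\{c_i:i\in P_A(s)\}\ne\emptyset$, which is precisely the condition in \eqref{eq:patternvalue}. Multiplying by $D$ gives $D\val(A)=\sum_s V_s(P_A(s))$. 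By the choice of $D$, each $V_s(P)$ is an integer, since it is a sum of integers $D\alpha_{u,s}$. Likewise $T=D\ell$ is an integer.

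($\Leftarrow$) Given an allocation $A$ with $\val(A)\ge\ell$, set $x^s_P=1$ if $P=P_A(s)$ and $x^s_P=0$ otherwise. Then \eqref{eq:local} holds. Because allocations are sets, each content contributes at most one unit to cache $c_i$. Hence $\sum_s\sum_{P\ni i}x^s_P=|A(c_i)|\le\kappa(c_i)$. We still have to set the slacks so that \eqref{eq:global} holds with $0\le h^s_i\le\kappa(c_i)$. Since $\Scal\ne\emptyset$, we can fix one content $s_0$ and put $h^{s_0}_i=\kappa(c_i)-|A(c_i)|\in[0,\kappa(c_i)]$, with all other slacks equal to $0$. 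The objective then equals $D\val(A)\ge D\ell=T$.

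($\Rightarrow$) Given a feasible integer solution of value at least $T$, the bounds $0\le x^s_P\le1$ together with \eqref{eq:local} force exactly one pattern $P_s$ with $x^s_{P_s}=1$. Define $A(c_i)=\{s: i\in P_s\}$. Since all slacks are nonnegative, \eqref{eq:global} gives $|A(c_i)|=\sum_s\sum_{P\ni i}x^s_P\le\kappa(c_i)$, so $A$ is a valid allocation. By construction $P_A(s)=P_s$, so the identity yields $D\val(A)$ equal to the objective value, which is at least $T$. Dividing by $D>0$ gives $\val(A)\ge\ell$.

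There is no real obstacle here; the only care needed is in two bookkeeping points. First, the OR semantics must be captured exactly by the pattern values, so we must check that the regrouping by content is a true equality and not merely a bound. This works because additional copies of a content are already absorbed into the pattern. Second, the slack placement must respect the per-variable bounds $h^s_i\le\kappa(c_i)$, which the choice of a single content $s_0$ handles. The degenerate case $D$ being an empty product, equal to $1$, causes no issue.
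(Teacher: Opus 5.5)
Your proposal is correct and follows essentially the same route as the paper. It uses the same pattern encoding $P_s=\{i: s\in A(c_i)\}$, puts all residual capacity into the slacks of a single content $s_0$ to respect $h^s_i\le\kappa(c_i)$, and recovers a unique pattern per content from the local constraint and the $0/1$ bounds. Your explicit regrouping identity $D\val(A)=\sum_{s\in\Scal}V_s(P_A(s))$ just spells out the step the paper justifies by citing the definition of the pattern values.
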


\begin{proof}
Let $A$ be a feasible allocation. For every content $s$, define
\[
  P_s=\{i\in[C]:s\in A(c_i)\}.
\]
Set $x^s_{P_s}=1$ and set all other pattern variables for $s$ to zero. Then \eqref{eq:local} holds. For cache $c_i$, let $m_i=|\{s:i\in P_s\}|$. Feasibility gives $m_i\le\kappa(c_i)$. Choose any content $s_0\in\Scal$, set $h^{s_0}_i=\kappa(c_i)-m_i$ for every $i$, and set all other slack variables to zero. Since $0\le \kappa(c_i)-m_i\le \kappa(c_i)$, equation \eqref{eq:global} holds and all slack bounds are respected. Finally, by \eqref{eq:patternvalue}, the objective value is
\[
  \sum_{s\in\Scal}V_s(P_s)=D\val(A).
\]
Thus an allocation of value at least $\ell$ yields an IP solution of value at least $T$.

Conversely, let an integer feasible solution of the IP be given. By \eqref{eq:local}, nonnegativity, and the bounds $x^s_P\le1$, each content $s$ selects a unique pattern $P_s$ with $x^s_{P_s}=1$. Define
\[
  A(c_i)=\{s\in\Scal:i\in P_s\}.
\]
For every cache $c_i$, equation \eqref{eq:global} gives
\[
  |A(c_i)|+\sum_{s\in\Scal}h^s_i=\kappa(c_i).
\]
Since the slack variables are nonnegative, $|A(c_i)|\le\kappa(c_i)$, so $A$ is feasible. The users seeing content $s$ are exactly those whose neighborhood intersects the caches indexed by $P_s$. Hence the scaled value of $A$ is $\sum_s V_s(P_s)$, the IP objective. An IP solution of value at least $T=D\ell$ therefore gives an allocation of value at least $\ell$.
\end{proof}

\begin{proof}[Proof of Theorem~\ref{thm:main}]
By Lemma~\ref{lem:nfold}, the optimization step is an $n$-fold IP with
\[
  r=C,
  \qquad q=1,
  \qquad t=2^C+C,
  \qquad \Gamma=1.
\]
Theorem~\ref{thm:nfold} solves it in time $f(C)L^{O(1)}$, where $L$ is the binary encoding length of the constructed IP. The transformation above has polynomial bit complexity after denominator clearing, so $L$ is polynomial in the original input length for fixed $C$.

It remains only to compute the coefficients $V_s(P)$. For each user $u$, store the $C$-bit mask of $N(u)$. Then $u$ contributes to $v_s(P)$ exactly when this mask intersects the mask of $P$. With a dense request matrix all coefficients can be computed in $O(|\Scal|2^C|\Ucal|)$ arithmetic operations, with polynomial bit complexity; with a sparse representation one iterates only over the listed nonzero requests and treats missing entries as zero, which is still $f(C)\operatorname{poly}(|I|)$. By Lemma~\ref{lem:exact}, deciding whether the optimum IP value is at least $T$ decides the original instance. The total running time is $f(C)|I|^{O(1)}$.
\end{proof}

\section{Consequences for the other homogeneous parameterizations}\label{sec:consequences}

Theorem~\ref{thm:main} proves fixed-parameter tractability for the parameter $C$. Ganian, Mc Inerney and Tsigkari's interreducibility theorem also transfers this result to all six parameterizations listed in Corollary~\ref{cor:intro} \cite[Theorem~14]{ganian2025parameterized}. We give a direct proof as well, because it is short and clarifies why no bounded-capacity assumption is needed.

\begin{lemma}\label{lem:cachetwins}
In a \HomNC{} instance, two caches with the same user-neighborhood can be replaced by one cache with that neighborhood and capacity equal to the sum of the two capacities. The optimum value is preserved.
\end{lemma}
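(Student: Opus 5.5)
The plan is to show the two optima coincide by proving two inequalities, one for each direction of the replacement. Let $c,c'$ be the twin caches with $N(c)=N(c')$, and let the new instance have the merged cache $c^*$ with $N(c^*)=N(c)$ and $\kappa(c^*)=\kappa(c)+\kappa(c')$; all other caches, users, and data are unchanged. The key observation is that for every user $u$, either both $c,c'\in N(u)$ or neither is. So a user's visible set depends on $A(c)$ and $A(c')$ only through their union $A(c)\cup A(c')$.

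First, from old to new: given a feasible allocation $A$ in the original instance, set $A^*(c^*)=A(c)\cup A(c')$ and keep all other caches as they are. Its size is at most $|A(c)|+|A(c')|\le\kappa(c^*)$, so $A^*$ is feasible. By the key observation, $R_{A^*}(u)=R_A(u)$ for every user $u$. Hence $\val(A^*)=\val(A)$ by \eqref{eq:value}.

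Second, from new to old: given a feasible $A^*$, split $S=A^*(c^*)$ into disjoint parts $S_1,S_2$ with $|S_1|\le\kappa(c)$ and $|S_2|\le\kappa(c')$. This is possible since $|S|\le\kappa(c)+\kappa(c')$: put the first $\min(|S|,\kappa(c))$ elements into $S_1$ and the rest into $S_2$. Set $A(c)=S_1$, $A(c')=S_2$, and keep the other caches unchanged. Then $A(c)\cup A(c')=S$, so every user again sees the same contents and the values agree.

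The two maps give inequalities in both directions, so the optima are equal. The decision answer for any target $\ell$ is therefore preserved as well. The only delicate step is the splitting in the second direction. It is where unit sizes matter: with heterogeneous sizes a bin-packing obstruction could arise, so I would state explicitly that homogeneity is used there. Iterating the lemma merges each class of twin caches, which leaves at most $2^U$ caches; this is how the lemma would feed the corollary for the parameter $U$.
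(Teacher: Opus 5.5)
Your proof is correct and follows essentially the same route as the paper's. You merge via $A(c)\cup A(c')$ and split the merged set into parts of sizes at most $\kappa(c)$ and $\kappa(c')$, noting that every user sees the same contents in both directions; your explicit splitting rule and your remark that homogeneity is what makes the split possible are fine additions.
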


\begin{proof}
Let $c$ and $c'$ be caches with $N(c)=N(c')$. Any allocation in the original instance maps to the merged cache by storing $A(c)\cup A(c')$. This union has size at most $|A(c)|+|A(c')|\le \kappa(c)+\kappa(c')$, and duplicate copies are irrelevant because contents have unit size and user value has OR semantics. Conversely, any set of at most $\kappa(c)+\kappa(c')$ contents stored in the merged cache can be partitioned into two sets of sizes at most $\kappa(c)$ and $\kappa(c')$ and assigned to $c$ and $c'$. In both directions, exactly the same users see exactly the same contents. Hence the optimum is unchanged.
\end{proof}

\begin{proof}[Proof of Corollary~\ref{cor:intro}]
The case $C$ is Theorem~\ref{thm:main}. The cases $C+U$ and $C+\lambda$ are immediate because they bound $C$.

If $U$ is bounded, Lemma~\ref{lem:cachetwins} leaves at most $2^U$ cache-neighborhood types. Hence the number of caches becomes bounded by a function of $U$, and Theorem~\ref{thm:main} applies. This also proves the case $U+K$, since $U+K$ bounds $U$.

It remains to discuss the vertex-cover parameter. Let $p=\vc(G)$ and compute a vertex cover $X$ with $|X|\le 2p$ by taking all endpoints of a maximal matching. Every cache outside $X$ has all its neighbors in $X\cap\Ucal$: otherwise an edge from such a cache to a user outside $X$ would be uncovered. Hence outside caches have at most $2^{|X\cap\Ucal|}$ distinct user-neighborhoods. After applying Lemma~\ref{lem:cachetwins} to caches outside $X$, the number of remaining caches is at most
\[
  |X\cap\Ccal|+2^{|X\cap\Ucal|}
  \le
  2p+2^{2p}.
\]
This is a function of $\vc(G)$ only, so Theorem~\ref{thm:main} applies again. Equivalently, all six cases follow from Theorem~\ref{thm:main} and the interreducibility theorem of Ganian, Mc Inerney and Tsigkari.
\end{proof}

Thus the homogeneous family conjectured to be W[1]-hard is in fact fixed-parameter tractable. 

\section{Conclusion}\label{sec:conclusion}

Homogeneous network caching is fixed-parameter tractable parameterized by the number of caches. The proof identifies the bounded interface of the problem: every content chooses one of $2^C$ placement patterns, and the only global interaction is through the $C$ cache-capacity constraints. This yields an exact $n$-fold integer program and avoids the XP dependence inherent in capacity-indexed dynamic programs.

Together with the interreducibility theorem of Ganian, Mc Inerney and Tsigkari, this resolves the remaining homogeneous parameterized cases of network caching on the FPT side. The formulation also gives a compact exact model for instances with few caches: precompute the pattern values and solve the resulting capacity-coupled integer program. The proof uses homogeneity essentially, since non-unit content sizes would make capacity contributions depend on the content brick and would destroy the identical-block formulation without additional structure. A natural next question is whether the special $0$--$1$ structure of the block matrices can lead to sharper dependence on $C$ than what follows from general $n$-fold IP algorithms.

\section*{Funding}
J\'ozsef Pint\'er is funded by Project No. KDP-IKT-2023-900-I1-00000957/0000003 with support provided by the Ministry of Culture and Innovation of Hungary from the National Research, Development and Innovation Fund, financed under the KDP-2023 funding scheme. 

\section*{Declaration of competing interest}
The authors declare that they have no competing interests.

\section*{Data availability}
No datasets were generated or analyzed during the current study.

\section*{Declaration of generative AI and AI-assisted technologies in the manuscript preparation process}
During the preparation of this work, the authors used OpenAI's ChatGPT for brainstorming, organization, \LaTeX{} drafting, and editorial assistance. The authors reviewed and edited the output and take full responsibility for the content of the manuscript.

\section*{Acknowledgements}
The authors thank Robert Ganian, Fionn Mc Inerney and Dimitra Tsigkari for initiating the parameterized study of network caching and for formulating the homogeneous open cases resolved here.

\bibliographystyle{elsarticle-num}
\bibliography{references}

\end{document}